\documentclass[a4paper,11pt]{article}

\usepackage[margin=1in]{geometry}
\usepackage{setspace}
\usepackage{amsmath}
\usepackage{amssymb}
\usepackage{amsthm}
\usepackage{mathrsfs}
\usepackage{esint}
\usepackage{hyperref}

\newtheorem{theorem}{Theorem}[section]

\hypersetup{
    hidelinks,
    pdftitle={An Electrodynamic Lifting of the Toda Monopoles in Curved 3+1 Spacetime},
    pdfauthor={Hanwen Liu}
}

\begin{document}

\title{\texorpdfstring{\textbf{An Electrodynamic Lifting of the Toda Monopoles in Curved 3+1 Spacetime}}{An Electrodynamic Lifting of the Toda Monopoles in Curved 3+1 Spacetime}}
\author{Hanwen Liu}
\date{}

\maketitle

\begin{abstract}
Every solution of the hyperbolic Toda equation determines a classical Abelian monopole on a (2+1)-dimensional Einstein--Weyl space. We determine when its connection is also a source-free electromagnetic potential on a curved Lorentzian 3+1 spacetime with a spacelike Killing field. On the generic branch, Maxwell's equation classifies the admissible Killing-fibre scale and horizontal twist through one characteristic of the Toda field. A separated Liouville branch gives purely magnetic fields with considerably freer fibre geometry. The round spherical solution yields a monopole of unit charge with conserved flux on a contracting screen. The background metric here is neither assumed to be Einstein nor self-dual.
\end{abstract}

\begin{center}
\textbf{Keywords:} Toda field; Magnetic monopole; Maxwell equations.
\end{center}

%\tableofcontents
\onehalfspacing
\raggedbottom

\section*{Acknowledgement}

The author is deeply grateful to Alexander Veselov and Evgeny Ferapontov for encouragement and inspiring discussions. Special thanks should also go to Michael Lau for careful proofreading.

\section{Introduction and Background}

The continuous Toda equation entered self-dual gravity through the work of Boyer and Finley \cite{BoyerFinley}. Its usual Riemannian form is elliptic; the equation considered here,
$$
\ddot{q}+\dot{q}^2=e^{-q}(q_{xx}+q_{yy}),
$$
is its hyperbolic or wave form. It is also called the Boyer--Finley equation and the dispersionless 2-dimensional Toda equation. The continual Lie-algebraic and hierarchy interpretations were developed in \cite{SavelievVershik,TakasakiTakebe,Strachan}, while the wave-form inverse problem was studied in \cite{ManakovSantini}. These are different aspects of the same integrable equation, rather than different equations concealed by the notation.

The geometric meaning of the equation is classical. Jones and Tod related self-dual conformal four-manifolds with a conformal Killing vector to Einstein--Weyl three-manifolds and Abelian generalized monopoles \cite{JonesTod}; Ward then placed the Toda equation explicitly in Einstein--Weyl geometry \cite{Ward}. The resulting correspondence, its Toda realizations and its families of self-dual metrics were developed further in \cite{Calderbank,CalderbankPedersen,CalderbankTod}. In this setting, the derivative $\dot q$, together with a multiple of $d^cq$, is the canonical Toda monopole. The phrase \emph{Toda monopole} will always mean this Abelian Jones--Tod monopole, and not a non-Abelian Bogomol'nyi–Prasad–Sommerfield monopole whose Nahm data happen to satisfy affine Toda equations.

There is also a classical passage from monopoles to 4-dimensional differential geometry. Abelian monopoles already drive the Gibbons--Hawking construction \cite{GibbonsHawking}, and the Jones--Tod correspondence uses generalized monopole data to reconstruct a self-dual conformal structure. Moreover, a second generalized monopole on the quotient gives a self-dual electromagnetic field on the reconstructed 4-manifold \cite{CalderbankPedersen}; related links between self-dual geometry and Maxwell curvature appear in \cite{BoyerPlebanski}. Thus, neither the Toda monopole nor the broad principle that monopoles may lift to electromagnetic fields is new.

The problem considered here is more specific. We prescribe a real Lorentzian spacetime with one spacelike Killing direction, regard its spin connection as an external electromagnetic potential, and ask exactly when its curvature is source-free. The Killing-fibre connection remains independent of the Toda-monopole connection, and the background is not required to be Einstein, self-dual or conformally self-dual. General invariant electromagnetic fields on spacetimes with a non-null Killing field have long been studied \cite{Racz}; what is isolated below is the exact Toda-generated subclass and the local classification of the fibre data that support it.

The first theorem gives this classification wherever $\dot q$ and its horizontal differential are nonzero. In this case, the reciprocal characteristic $s:=t-1/\dot q$, as a shifted time variable, determines the fibre scale, while the time derivative of the horizontal twist is constrained to follow the level sets of $\dot q$. A short nonseparated example shows that this branch contains mixed electric and magnetic fields. The second theorem treats the complementary separated branch. Its screen geometry obeys the Liouville equation, its field is purely magnetic, and its fibre geometry has substantially greater freedom. The final example globalizes the curvature over a round sphere and realizes the usual unit magnetic charge on an evolving background.

The precise Lorentzian test-field lift and its radius--twist classification appear not to have been recorded previously. This is the novelty claimed here. The Toda equation, its integrability, its Einstein--Weyl interpretation and its canonical monopole are used as classical input. Nearby Riemannian Einstein--Maxwell constructions based on Toda geometry, such as those presented in \cite{Araneda}, impose different field equations and belong to a different lifting problem.

All statements are local unless a global holographic screen surface is explicitly introduced. Section~2 fixes the local conventions and proves the lifting results, Section~3 gives the spherical monopole, and Section~4 records the scope and natural extensions of the construction.

\section{The Main Results on Electrodynamic Liftings}

As per usual, by an electromagnetic potential, we mean a $\operatorname{U}(1)$ connection 1-form whose curvature $F$ obeys the Maxwell equation $dF=d\star F=0$, where $\star:=\star_4$ is the Hodge star of the generally curved background 3+1 spacetime.

We work on a coordinate patch on which the synchronous normal form below exists:

Let $g$ be a 3+1 spacetime metric admitting a spacelike Killing vector field, of which horizontal screen is integrable and the screen shear vanishes, so in particular, locally there exist smooth scalar functions $q:=q(t;x,y)$ and $A:=A(t;x,y),B:=B(t;x,y),C:=C(t;x,y)$ with $C>0$, such that 
\begin{equation}\label{metric}
g=-c^2dt^2+e^q(dx^2+dy^2)+(Adx+Bdy+Cdz)^2
\end{equation}
where 
$$
\frac{\partial}{\partial x}\left(\frac{B}{C}\right)-\frac{\partial}{\partial y}\left(\frac{A}{C}\right)=0,
$$
and $(x^0,x^1,x^2,x^3):=(t,x,y,z)$ here is the 4-position in the universe. 

Note that, since 
$$
\eta(t):=\frac{A(t)dx+B(t)dy}{C(t)}
$$
is a closed spatial 1-form at each fixed time $t$, the spin connection of $g$ reduces to 
$$
\omega:=d^cq=\frac{1}{2}(q_ydx-q_xdy),
$$
where we use the standard $d^c$ operator in the $xOy$-plane identified with $\mathbb{C}\equiv\mathbb{R}^2$.

From now on, for simplicity, we shall normalize the units so that the speed of light is $c=1$.

We first make the relation with the classical monopole explicit. We write $p:=\dot q$, and denote by $\star_3$ the Hodge star operator of the 2+1 Lorentzian metric $-dt^2+e^q(dx^2+dy^2)$ with orientation $dt\wedge dx\wedge dy$. A direct differentiation shows that the Toda equation~(\ref{Toda}) is equivalent to
\begin{equation}\label{Toda_monopole}
2d\omega=\star_3(dp+p^2dt).
\end{equation}
The Weyl form in the Jones--Tod convention is $2pdt$, so equation~(\ref{Toda_monopole}) is precisely its generalized Abelian monopole equation for the pair $(p,2\omega)$. The factor $2$ merely records our normalization of the differential $d^c$. In the present problem, the connection $\omega$ is lifted as the electromagnetic potential, whereas the Killing form $\eta+dz$ is determined independently.

\begin{theorem}\label{main_theorem}
Let $g=-dt^2+e^q(dx^2+dy^2)+C^2(\eta+dz)^2$ be the metric in (\ref{metric}). Assume the conformal factor $q$ is a local solution to the Toda equation
\begin{equation}\label{Toda}
\ddot{q}+\dot{q}^2=e^{-q}(q_{xx}+q_{yy})
\end{equation}
with $\dot{q}^2(\dot{q}_x\dot{q}_x+\dot{q}_y\dot{q}_y)>0$. Then, the spin connection $\omega=d^cq$ of the spacetime $g$ is an electromagnetic potential if and only if $C=\exp(f(t-1/\dot{q})-2\log|\dot{q}|)$ and 
$$
\dot{\eta}=\frac{\partial H(t,\dot{q})}{\partial x}dx+\frac{\partial H(t,\dot{q})}{\partial y}dy
$$
for some smooth functions $f=f(t)$ and $H=H(t,p)$.
\end{theorem}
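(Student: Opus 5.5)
The plan is to compute $F=d\omega$ explicitly and express $\star F$ through the monopole identity (\ref{Toda_monopole}). The Maxwell system then collapses to two conditions: one on $C$ and one on $\dot\eta$.

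\textbf{Step 1: the curvature and Bianchi.} Differentiating $\omega=\tfrac12(q_y\,dx-q_x\,dy)$ in all of $t,x,y$ gives
$$
F=d\omega=\tfrac12\bigl(p_y\,dt\wedge dx-p_x\,dt\wedge dy\bigr)-\tfrac12(q_{xx}+q_{yy})\,dx\wedge dy .
$$
This form has no $dz$ leg and no $z$-dependence. The equation $dF=0$ is automatic, so only $d\star F=0$ needs to be analysed.

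\textbf{Step 2: reduce $\star_4$ to $\star_3$.} Use the orthonormal coframe $e^0=dt$, $e^1=e^{q/2}dx$, $e^2=e^{q/2}dy$, $e^3=C(\eta+dz)$. Any 2-form $\beta$ built from $e^0,e^1,e^2$ satisfies $\star\beta=\pm(\star_3\beta)\wedge e^3$, with a fixed sign set by orientation. Combining this with (\ref{Toda_monopole}) in the form $\star_3(2F)=\pm(dp+p^2dt)$, I expect
$$
\star F=\pm\tfrac12\,C\,(dp+p^2dt)\wedge(\eta+dz).
$$
Next I would compute $d\eta$. At each fixed $t$ the form $\eta$ has only $dx,dy$ legs and is spatially closed, so $d\eta=dt\wedge\dot\eta$. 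Writing $\Theta:=C(dp+p^2dt)$, this gives
$$
d\star F=\pm\tfrac12\bigl(d\Theta\wedge(\eta+dz)-\Theta\wedge dt\wedge\dot\eta\bigr).
$$

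\textbf{Step 3: split into the $dz$ and non-$dz$ components.} The $dz$ component vanishes exactly when $d\Theta=0$. Given that, the remaining component reduces to $C\,dp\wedge dt\wedge\dot\eta=0$. Since $C>0$, this says the horizontal part of $\dot\eta$ is parallel to $d_hp=p_xdx+p_ydy$, which is nonzero by hypothesis. So $\dot\eta=\lambda\,d_hp$ for some function $\lambda$.

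\textbf{Step 4: the condition $d\Theta=0$.} Write $L=\log C$. The condition becomes $dL\wedge(dp+p^2dt)+2p\,dp\wedge dt=0$. Substituting $L=-2\log|p|+M$, the $-2\log|p|$ term contributes exactly $-2p\,dp\wedge dt$ and cancels the inhomogeneous term; here $p\neq0$ is used. What remains is $dM\wedge(dp+p^2dt)=0$, and $dp+p^2dt=p^2\,d(t-1/p)=p^2\,ds$. Because $d_hp\neq0$, $ds$ is nowhere zero, and a local functional-dependence argument gives $M=f(s)$. This is the stated formula for $C$.

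\textbf{Step 5: the condition on $\dot\eta$.} Since $\eta$ is spatially closed for every $t$, so is $\dot\eta$. Closedness of $\lambda\,d_hp$ forces $d_h\lambda\wedge d_hp=0$, so on each slice $\lambda=h(t,p)$ locally. Setting $H=\int h\,dp$ then gives $\dot\eta=d_hH(t,p)$. For the converse I would run each step backwards, since every step is an equivalence.

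The main obstacle I expect is the careful bookkeeping in Step 2: getting the $\star_4$ versus $\star_3$ sign and $C$ factor right, and checking that $\eta$ contributes only through $dt\wedge\dot\eta$. A secondary point is justifying the functional-dependence conclusions ($M=f(s)$ and $\lambda=h(t,p)$) locally. For those I would use the nondegeneracy hypothesis $\dot q^2|\nabla\dot q|^2>0$ to take $s$ (respectively $p$) as a coordinate on a small patch.
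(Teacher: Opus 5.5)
Your proposal is correct and follows essentially the same route as the paper: compute $F$, use the Toda equation to get $\star F=-\tfrac{C}{2}(dp+p^2dt)\wedge(\eta+dz)$, split $d\star F=0$ into $d(Cp^2\,ds)=0$ and $(p_xdx+p_ydy)\wedge\dot\eta=0$, and then solve each equation locally, using $s=t-1/p$ for the first and $(t,p)$ as coordinates for the second. The differences are cosmetic: you obtain $\star F$ from the monopole identity via $\star\beta=(\star_3\beta)\wedge e^3$ and $\star_3^2=-1$ on 1-forms, and you argue more explicitly that $\lambda$ depends only on $(t,p)$.
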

\begin{proof}
Denote by $\star$ the Hodge star of $g$, and set $p:=\dot q$. A direct computation gives
$$
F_\omega:=d\omega
=\frac{1}{2}dt\wedge(p_ydx-p_xdy)
-\frac{1}{2}(q_{xx}+q_{yy})dx\wedge dy.
$$
With respect to the induced orientation $dt\wedge dx\wedge dy\wedge dz$, we also obtain
$$
\star F_\omega
=-\frac{C}{2}(p_xdx+p_ydy+e^{-q}(q_{xx}+q_{yy})dt)\wedge(\eta+dz),
$$
the dual 2-form of the Faraday tensor.

By the Toda equation (\ref{Toda}), we have that
$$
p_xdx+p_ydy+e^{-q}(q_{xx}+q_{yy})dt
=dp+p^2dt=p^2d(t-1/p).
$$
Since $\eta(t)$ is a closed spatial 1-form at each fixed time $t$, separating the terms containing $\eta+dz$ from the horizontal terms gives
$d\star F_\omega=0$
if and only if the system~($*$)
$$
\begin{cases}
d(Cp^2d(t-1/p))=0\\
p_xdx+p_ydy\wedge\dot\eta=0
\end{cases}
$$
is satisfied.
Here, the second equation in the system~($*$) is understood as
$$
(p_xdx+p_ydy)\wedge\dot\eta=0.
$$
The hypothesis $\dot{q}^2(\dot{q}_x\dot{q}_x+\dot{q}_y\dot{q}_y)>0$ implies that the shifted time $s:=t-1/p$ is a local coordinate. The first equation in the system~($*$) is therefore equivalent to $Cp^2=\exp(f(s))$, which gives
$$
C=\exp\big(f(t-1/\dot{q})-2\log|\dot q|\big).
$$
The second equation in the system~($*$) now gives $\dot\eta=h(p_xdx+p_ydy)$ for some smooth function $h$ in two variables. Differentiating again yields
$$
(p_xdx+p_ydy)\wedge(h_xdx+h_ydy)=0.
$$
Therefore, we have that 
$h=\partial H/\partial\dot{q}$
locally for a smooth function $H=H(t,p)$, and consequently
$$
\dot{\eta}=\frac{\partial H(t,\dot{q})}{\partial x}dx+\frac{\partial H(t,\dot{q})}{\partial y}dy.
$$
The converse follows by reversing the computation. Since $F_\omega=d\omega$, the Bianchi identity $dF_\omega=0$ is automatic. This completes the proof.
\end{proof}

We shall now provide a generic local example.
On the half-plane $\mathbb{H}^2$ where $y>0$, straightforward computation verifies that
$$
q:=\log((x+t)^2+y^2)-2\log(y)
$$
is a solution to the Toda equation as $$(e^q)_{tt}=q_{xx}+q_{yy}=\frac{2}{y^2}.$$
We also obtain that
$$
p:=\dot q=\frac{2(x+t)}{(x+t)^2+y^2}
$$
and that
$$
p_xp_x+p_yp_y=\frac{4}{((x+t)^2+y^2)^2}>0.
$$
On the region where $x+t\neq0$, the choices $\eta=0$ and
$$
C:=\frac{1}{p^2}=\frac{((x+t)^2+y^2)^2}{4(x+t)^2}
$$
for the metric 
$$
g=-dt^2+e^q(dx^2+dy^2)+C^2(\eta+dz)^2
$$
correspond to $f\equiv0$ in Theorem~\ref{main_theorem}, and hence give a genuinely time-dependent electromagnetic field with nonzero electric and magnetic parts.

The preceding classification applies where the time derivative of the Toda scalar field has a nonzero horizontal gradient. When this gradient vanishes, the characteristic used in the proof loses rank and Maxwell's equation acquires additional freedom. The separated solutions in the following theorem lie in this complementary branch.

\begin{theorem}\label{pure_magnetic}
Let $K,\beta\in\mathbb{R}$ and $q(t;x,y):=\log(r(t))+2u(x,y)$, where $r(t):=-t^2K+\beta>0$ and $u=u(x,y)$ solves the Liouville equation 
\begin{equation}\label{Liouville}
u_{xx}+u_{yy}=-Ke^{2u}.
\end{equation}
Then, for any smooth functions $\lambda:=\lambda(t),P:=P(t;x,y),Q:=Q(t;x,y)$ satisfying $\lambda>0$ and $Q_x=P_y$, the spin connection $\omega=d^cq$ of the 3+1 spacetime 
$$
g:=-dt^2+e^q(dx^2+dy^2)+\lambda(Pdx+Qdy+dz)^2
$$
is an electromagnetic potential, of which Faraday tensor $F_\omega=Ke^{2u}dx\wedge dy$ is purely magnetic.
\end{theorem}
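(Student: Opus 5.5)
The plan is to verify Maxwell's equations directly rather than through Theorem~\ref{main_theorem}: here $p=\dot q=\dot r/r=-2Kt/r(t)$ has zero horizontal gradient, so that theorem's hypothesis fails. The metric is of the form (\ref{metric}) with $C=\sqrt{\lambda}$ and $\eta=P\,dx+Q\,dy$. The condition $Q_x=P_y$ is exactly the closedness of $\eta$ at each fixed time, which is also the integrability hypothesis of the normal form. As a consistency check I would first note that $q$ solves the Toda equation (\ref{Toda}). Indeed $\ddot q+\dot q^2=\ddot r/r=-2K/r$, and by (\ref{Liouville})
$$
e^{-q}(q_{xx}+q_{yy})=\frac{2(u_{xx}+u_{yy})}{r\,e^{2u}}=-\frac{2K}{r}.
$$
This places the example in the Toda family, although the Maxwell verification below does not use it.

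First I compute the curvature. Since $\log r(t)$ has no horizontal derivatives, $\omega=d^cq=u_y\,dx-u_x\,dy$. This is independent of $t$, so
$$
F_\omega=d\omega=-(u_{xx}+u_{yy})\,dx\wedge dy=Ke^{2u}\,dx\wedge dy .
$$
By the Liouville equation $F_\omega$ has no $dt$-component, so it is purely magnetic. The Bianchi identity $dF_\omega=0$ holds automatically because $F_\omega$ is exact.

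Next I compute $\star F_\omega$ using the orthonormal coframe
$$
e^0=dt,\quad e^1=e^{q/2}dx,\quad e^2=e^{q/2}dy,\quad e^3=\sqrt{\lambda}\,\theta,\qquad \theta:=P\,dx+Q\,dy+dz .
$$
In this coframe $g=-(e^0)^2+(e^1)^2+(e^2)^2+(e^3)^2$ exactly, and the cross terms of $g$ are absorbed into $e^3$. Since $dx\wedge dy=e^{-q}e^1\wedge e^2$ and $\star(e^1\wedge e^2)=\pm e^0\wedge e^3$, we get
$$
\star F_\omega=\pm\frac{K\sqrt{\lambda(t)}}{r(t)}\,dt\wedge\theta=:\varphi(t)\,dt\wedge\theta .
$$
The key observation is that the factor $e^{2u}$ cancels against $e^{-q}=e^{-2u}/r$. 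As a result the coefficient $\varphi$ depends on $t$ alone, and the sign, fixed by the orientation $dt\wedge dx\wedge dy\wedge dz$ as in the proof of Theorem~\ref{main_theorem}, is irrelevant.

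Then $d\star F_\omega=-\varphi(t)\,dt\wedge d\theta$. In $d\theta$ every term containing $dt$ is killed by the wedge with $dt$, and the remaining spatial part is $(Q_x-P_y)\,dx\wedge dy=0$. Hence $d\star F_\omega=0$, and $\omega$ is an electromagnetic potential for any $\lambda>0$ and any $P,Q$ with $Q_x=P_y$. The only step needing care is the Hodge dual with the non-diagonal fibre term. I would handle it entirely in the orthonormal coframe to avoid inverting $g$, and the cancellation making $\varphi$ horizontally constant is what removes any constraint on $\lambda$.
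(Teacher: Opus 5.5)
Your proof is correct and follows the paper's argument essentially step for step. You compute $F_\omega=Ke^{2u}\,dx\wedge dy$, note that $\star F_\omega=(K\sqrt{\lambda}/r)\,dt\wedge\theta$ has a coefficient depending only on $t$, and reduce $d\star F_\omega=0$ to $Q_x=P_y$; the orthonormal coframe simply makes explicit the Hodge dual that the paper states directly. One small wording slip: $F_\omega$ has no $dt$-component because $\omega=u_y\,dx-u_x\,dy$ is $t$-independent, not because of the Liouville equation, which only fixes the coefficient $Ke^{2u}$.
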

\begin{proof}
Again, we denote by $\star$ the Hodge star operator of $g$. Since $\ddot r=-2K$, we have that
$$
\ddot q+\dot q^2=\frac{\ddot r}{r}
=-\frac{2K}{r}
=e^{-q}(q_{xx}+q_{yy}),
$$
where the last equality follows from (\ref{Liouville}). Moreover, it holds that
$$
F_\omega=d\omega
=-\frac{1}{2}(q_{xx}+q_{yy})dx\wedge dy
=Ke^{2u}dx\wedge dy,
$$
so the field strength $F_\omega$ is purely magnetic. Writing $\theta:=Pdx+Qdy+dz$, we obtain
$$
\star F_\omega=\frac{K}{r}\sqrt{\lambda}dt\wedge\theta.
$$
As both $\lambda(t)$ and $r(t)$ depend only on time $t$, we conclude that
$$
d\star F_\omega
=-\frac{K}{r}\sqrt{\lambda}dt\wedge(Q_x-P_y)dx\wedge dy=0,
$$
which verifies the Maxwell equation. The Bianchi identity $dF_\omega=0$ is again automatic.
\end{proof}

Thus, the loss of the generic characteristic is not a degeneracy of the electromagnetic field. It replaces the rigid relation by an arbitrary positive time-dependent fibre scale and an arbitrary closed horizontal fibre form.

\section{An Example of Spherical Monopoles}

We now globalize the curvature of the separated magnetic branch. The spacetime interval in this section is defined for time $0<t<1$, while the values $t=0$ and $t=1$ refer only to its limiting ends.

We assume now the holographic screen of the spacetime metric 
$$
g=-dt^2+e^q(dx^2+dy^2)+(Adx+Bdy+Cdz)^2
$$
in (\ref{metric}) is the 2-sphere, meaning that the $xOy$-plane is a stereographic coordinate chart of $\mathbb{S}^2$. We then choose the constants in Theorem~\ref{pure_magnetic} to be $K=\beta=1$,
so that $r(t):=1-t^2$ is positive for $t\in[0,1)$. Consider the elliptic solution $u=u(x,y)$ to the Liouville equation~(\ref{Liouville}) given by
$$
e^u:=\frac{2}{1+x^2+y^2}
$$
and write $q(t;x,y):=\log(r(t))+2u(x,y)$ as in Theorem~\ref{pure_magnetic}. Then, the 3+1 spacetime metric 
$$
g:=-dt^2+e^q(dx^2+dy^2)+t^{-4}(1-t^2)^4dz^2
$$
satisfies all the hypotheses of Theorem~\ref{pure_magnetic}, so its spin connection
$$
\omega:=d^cq=2\frac{xdy-ydx}{1+x^2+y^2}
$$
is an electromagnetic potential, of which Faraday tensor $$
F_\omega=Ke^{2u}dx\wedge dy=\frac{4dx\wedge dy}{(1+x^2+y^2)^2}
$$ 
is purely magnetic. In particular, the spin connection $\omega=d^cq$ of the spacetime $g$ enjoys unit magnetic charge
$$
\frac{1}{\pi}\iint_{\mathbb{R}^2}\frac{dx\wedge dy}{(1+x^2+y^2)^2}=\frac{1}{4\pi}\oiint_{\mathbb{S}^2}Ke^{2u}dx\wedge dy=\frac{1}{2}\chi(\mathbb{S}^2)=1
$$
by the Gauss-Bonnet theorem. Here, the $\operatorname{U}(1)$ connection 1-form $\omega$ is a local gauge potential on the stereographic coordinate chart, whereas its curvature is global on the sphere $\mathbb{S}^2$.

This is precisely the usual Dirac--Wu--Yang distinction between a local gauge potential and a global monopole curvature described in \cite{Dirac,WuYang}, now placed on a time-dependent curved background. The normalization above is the physical magnetic charge. As the curvature of the oriented screen frame bundle, the same field has Chern number
$$
\frac{1}{2\pi}\oiint_{\mathbb{S}^2}F_\omega=2.
$$

Finally, a computation also yields that the magnetic field generated by the potential $\omega$ has magnitude $(1-t^2)^{-1}$ and is pointing in the direction of the positive $z$-axis, so that its flux is conserved while the measured magnetic field grows as the holographic screen 2-sphere contracts over time.

We also remark that $t=0$ is a monopole-zero end, and that $t=1$ is a contracting singular end of the universe.
Here, the \emph{monopole-zero} refers to the Toda monopole scalar $\dot q$, which tends to zero as $t\to0$, while the Faraday curvature 2-form itself remains nonzero. At the same end, the Killing-fibre radius diverges, while at $t=1$ both the screen and the fibre collapse.

\section{Concluding Remarks}

The central point of the construction is the coexistence of two classical first-order structures. The Toda equation makes the pair $(\dot q,2d^cq)$ a generalized Abelian monopole on the $2+1$ dimensional quotient manifold, while the 4-dimensional Maxwell equation tests the same connection, but against a Lorentzian Hodge star. On the generic branch, the second requirement is equivalent to the reciprocal characteristic formula for $C$ and to the level-set condition for $\dot\eta$. The separated branch has different rank and yields the purely magnetic Liouville family. The spherical member has conserved flux and unit physical charge.

The result should not be confused with the Jones--Tod self-dual lifting: In that construction, the monopole connection controls the 4-dimensional fibration, and a further monopole can produce a self-dual electromagnetic field. Here, the Killing-fibre connection is independent, the normalized time-autonomous member has $C^2=1/\dot q^4$, and the resulting Lorentzian background is allowed to be neither Einstein nor self-dual. The contribution is therefore the exact test-field realization and its local exact classification, not a new Toda equation or a new generalized monopole.

Several extensions remain natural. General Toda monopoles satisfy a linear equation over the same Einstein--Weyl background, and it is reasonable to classify which of them admit analogous Lorentzian electromagnetic lifts. The known hydrodynamic and inverse spectral constructions of Toda fields \cite{Ferapontov,ManakovSantini,DunajskiFerapontovKruglikov,CalderbankKruglikov} should supply further explicit electromagnetic backgrounds. Global versions must compare the Killing circle bundle with the $\operatorname{U}(1)$ bundle and control the energy, as well as the flux quantization. Gravitational back-reaction and non-Abelian analogues are separate problems and are best left beyond the present test-field theory.

\section*{Statements and Declarations}

No funding was received to assist with the preparation of this manuscript.

The author certifies that the author has no affiliations with or involvement in any other organization or entity with any financial interest or non-financial interest in the subject matter or materials discussed in this manuscript.

Data sharing is not applicable to this article as no datasets were generated or analyzed during the current study.

\singlespacing
\sloppy


\begin{thebibliography}{99}

\bibitem{Dirac}
P.~A.~M. Dirac,
\emph{Quantised singularities in the electromagnetic field},
Proc. Roy. Soc. Lond. A \textbf{133} (1931), 60--72,
\href{https://doi.org/10.1098/rspa.1931.0130}{doi:10.1098/rspa.1931.0130}.

\bibitem{WuYang}
T.~T. Wu and C.~N. Yang,
\emph{Concept of nonintegrable phase factors and global formulation of gauge fields},
Phys. Rev. D \textbf{12} (1975), 3845--3857,
\href{https://doi.org/10.1103/PhysRevD.12.3845}{doi:10.1103/PhysRevD.12.3845}.

\bibitem{GibbonsHawking}
G.~W. Gibbons and S.~W. Hawking,
\emph{Gravitational multi-instantons},
Phys. Lett. B \textbf{78} (1978), 430--432,
\href{https://doi.org/10.1016/0370-2693(78)90478-1}{doi:10.1016/0370-2693(78)90478-1}.

\bibitem{BoyerFinley}
C.~P. Boyer and J.~D. Finley III,
\emph{Killing vectors in self-dual, Euclidean Einstein spaces},
J. Math. Phys. \textbf{23} (1982), 1126--1130,
\href{https://doi.org/10.1063/1.525479}{doi:10.1063/1.525479}.

\bibitem{BoyerPlebanski}
C.~P. Boyer and J.~F. Pleba\'nski,
\emph{Conformally self-dual spaces and Maxwell's equations},
Phys. Lett. A \textbf{106} (1984), 125--129,
\href{https://doi.org/10.1016/0375-9601(84)90904-6}{doi:10.1016/0375-9601(84)90904-6}.

\bibitem{JonesTod}
P.~E. Jones and K.~P. Tod,
\emph{Minitwistor spaces and Einstein--Weyl spaces},
Class. Quantum Grav. \textbf{2} (1985), 565--577,
\href{https://doi.org/10.1088/0264-9381/2/4/021}{doi:10.1088/0264-9381/2/4/021}.

\bibitem{SavelievVershik}
M.~V. Saveliev and A.~M. Vershik,
\emph{Continuum analogues of contragredient Lie algebras},
Commun. Math. Phys. \textbf{126} (1989), 367--378,
\href{https://doi.org/10.1007/BF02125130}{doi:10.1007/BF02125130}.

\bibitem{Ward}
R.~S. Ward,
\emph{Einstein--Weyl spaces and $\operatorname{SU}(\infty)$ Toda fields},
Class. Quantum Grav. \textbf{7} (1990), L95--L98,
\href{https://doi.org/10.1088/0264-9381/7/4/003}{doi:10.1088/0264-9381/7/4/003}.

\bibitem{TakasakiTakebe}
K. Takasaki and T. Takebe,
\emph{$\operatorname{SDiff}(2)$ Toda equation---hierarchy, tau function, and symmetries},
Lett. Math. Phys. \textbf{23} (1991), 205--214,
\href{https://doi.org/10.1007/BF01885498}{doi:10.1007/BF01885498}.

\bibitem{Racz}
I. R\'acz,
\emph{electromagnetic fields in spacetimes admitting non-null Killing vectors},
Class. Quantum Grav. \textbf{10} (1993), L167--L172,
\href{https://arxiv.org/abs/gr-qc/9303014}{arXiv:gr-qc/9303014}.

\bibitem{Strachan}
I.~A.~B. Strachan,
\emph{The dispersive self-dual Einstein equations and the Toda lattice},
J. Phys. A \textbf{29} (1996), 6117--6124,
\href{https://doi.org/10.1088/0305-4470/29/18/036}{doi:10.1088/0305-4470/29/18/036}.

\bibitem{Calderbank}
D.~M.~J. Calderbank,
\emph{The geometry of the Toda equation},
J. Geom. Phys. \textbf{36} (2000), 152--162,
\href{https://doi.org/10.1016/S0393-0440(00)00019-X}{doi:10.1016/S0393-0440(00)00019-X}.

\bibitem{CalderbankPedersen}
D.~M.~J. Calderbank and H. Pedersen,
\emph{Selfdual spaces with complex structures, Einstein--Weyl geometry and geodesics},
Ann. Inst. Fourier \textbf{50} (2000), 921--963,
\href{https://doi.org/10.5802/aif.1779}{doi:10.5802/aif.1779}.

\bibitem{CalderbankTod}
D.~M.~J. Calderbank and P. Tod,
\emph{Einstein metrics, hypercomplex structures and the Toda field equation},
Differential Geom. Appl. \textbf{14} (2001), 199--208,
\href{https://doi.org/10.1016/S0926-2245(01)00037-7}{doi:10.1016/S0926-2245(01)00037-7}.

\bibitem{Ferapontov}
E.~V. Ferapontov, D.~A. Korotkin and V.~A. Shramchenko,
\emph{Boyer--Finley equation and systems of hydrodynamic type},
Class. Quantum Grav. \textbf{19} (2002), L205--L210,
\href{https://doi.org/10.1088/0264-9381/19/24/101}{doi:10.1088/0264-9381/19/24/101}.

\bibitem{ManakovSantini}
S.~V. Manakov and P.~M. Santini,
\emph{The dispersionless 2D Toda equation: dressing, Cauchy problem, longtime behavior, implicit solutions and wave breaking},
J. Phys. A \textbf{42} (2009), 095203,
\href{https://doi.org/10.1088/1751-8113/42/9/095203}{doi:10.1088/1751-8113/42/9/095203}.

\bibitem{DunajskiFerapontovKruglikov}
M. Dunajski, E.~V. Ferapontov and B. Kruglikov,
\emph{On the Einstein--Weyl and conformal self-duality equations},
J. Math. Phys. \textbf{56} (2015), 083501,
\href{https://doi.org/10.1063/1.4927251}{doi:10.1063/1.4927251}.

\bibitem{CalderbankKruglikov}
D.~M.~J. Calderbank and B. Kruglikov,
\emph{Integrability via geometry: dispersionless differential equations in three and four dimensions},
Commun. Math. Phys. \textbf{382} (2021), 1811--1841,
\href{https://doi.org/10.1007/s00220-020-03913-y}{doi:10.1007/s00220-020-03913-y}.

\bibitem{Araneda}
B. Araneda,
\emph{Hidden symmetries of generalised gravitational instantons},
Ann. Henri Poincar\'e \textbf{26} (2025), 4021--4049,
\href{https://doi.org/10.1007/s00023-024-01515-1}{doi:10.1007/s00023-024-01515-1}.

\end{thebibliography}
\end{document}